\documentclass[journal,onecolumn,12pt]{IEEEtran}
\IEEEoverridecommandlockouts
\usepackage{amsmath, amssymb, stmaryrd} 
\usepackage{graphicx}
\usepackage{amsmath}
\usepackage{color}
\usepackage{hyperref}
\usepackage{multirow}
\usepackage{fancyhdr}
\usepackage[normal,bf,up]{caption2}
\usepackage{subfigure}
\usepackage{amsthm}
\usepackage[section]{placeins}
\usepackage{comment}
\usepackage{amssymb}
\usepackage{setspace}
\usepackage{mathtools, nccmath}

\begin{document}
\parindent=5mm

\DeclarePairedDelimiter{\nint}\lfloor\rceil
\newcommand{\beq}{\begin{equation}}
\newcommand{\eeq}{\end{equation}}
\newcommand{\Hb}{\mathbf {H}}
\newcommand{\F}{\mathbf {F}}
\newcommand{\B}{\mathbf {B}}
\newcommand{\I}{\mathbf {I}}
\newcommand{\p}{\boldsymbol {p}}
\newcommand{\bv}{\boldsymbol {b}}
\newcommand{\matr}[1]{{\mathbf{#1}}}
\renewcommand{\vec}[1]{{\boldsymbol{#1}}}
\newcommand{\x}{\boldsymbol {x}}
\newcommand{\n}{\boldsymbol {n}}
\newcommand{\s}{\boldsymbol {s}}
\newcommand{\w}{\boldsymbol {w}}
\newcommand{\vv}{\boldsymbol {v}}
\newcommand{\cx}{\boldsymbol {c}}
\newcommand{\rx}{\boldsymbol {r}}
\newcommand{\y}{\boldsymbol {y}}
\newcommand{\z}{\boldsymbol {z}}
\newcommand{\R}{\mathbf {R}}
\newcommand{\A}{\mathbf {A}}
\newcommand{\mL}{\mathcal{L}}
\newcommand{\raw}{\rightarrow}

\newtheorem{lemma}{Lemma}
\newtheorem{theorem}{Theorem}
\newcommand{\gvect}[1]{{\boldsymbol{#1}}} 
\newcommand{\pound}{\operatornamewithlimits{\#}}
\newcommand{\stxt}[1]{\ensuremath{_{\text{#1}}}}
\renewcommand{\topfraction}{1}	
\renewcommand{\bottomfraction}{1}	
\renewcommand{\textfraction}{0}	
\newcommand{\defeq}{\stackrel{\textrm{def}}{=}}



\title { An Approach to Asynchronous Unsourced Random Access}

\author{
  Alireza Karami \texttt{akarami@dal.com}\\
  \and
  Dmitry Trukhachev \texttt{dmitry@dal.ca}
}

\maketitle

\begin{abstract}
 In this work we propose an approach to construct a fully asynchronous unsourced random access communications (URA) system. Besides the common URA features such as the absence of the user identification information in the packet and the decoding oriented to receiving the content of the messages, there is no limitation on the user transmission timing. The active users can transmit their packet any time on demand. The proposed system belongs to the class of preamble-payload URA formats. Contrary to the typical role of the preamble to serve as a temporary user identification, the preamble serves for the purpose of timing acquisition.  The proposed system also employs a very small pool of preambles to resolve collisions of packets transmitted simultaneously. We provide a performance analysis and simulation results for the proposed system and demonstrate that for a wide range of signal-to-noise ratios (SNR)s the supported numbers of active users are close to the numbers of users supported by a slotted URA system counterpart.
\end{abstract}

\section{Introduction}

Unsourced multiple access (URA)~\cite{Pol17} is a technique for grant-free communication with the main target application in massive connectivity in the next generation of communication networks. In URA the packets transmitted by multiple users have to deliver information to the common receiver without a need to reveal and detect the identity of the transmitting users. The identity of a sender of a packet can be included in the payload if required. It is often assumed that the pool of users who can access the channel is large, but only a small fraction of the entire potential user population transmits concurrently. 

Several examples of recently proposed URA systems are based on the compressed sensing method~\cite{OP17,FenglerC19}. There are also URA systems~\cite{KAFP20,AMF20} based on ALOHA which apply error correction codes such as low density parity check (LDPC) codes to overcome collisions of users within the slots. In~\cite{TBKN20} a spatial coupling idea related to multi-user communications studied in~\cite{KT19} and~\cite{TS19} is applied to construct a URA transmission method. 
The majority of the URA systems proposed in the literature are synchronous and slotted, meaning that the packets are only allowed to be transmitted at the beginning of common time frames (slots).

Asynchronous transmission of packets can naturally occur in a URA system when the active users are not listening to a beacon from the receiver, due to wake up energy saving cycles, independent on demand transmission schedules etc. Another form of asynchronicity is the controlled asynchronicity when the users are selecting to transmit their data packets with pseudo-random delays. The latter approach allows to facilitate the reception using interference cancellation algorithms operating on a coupled chain of packets. In~\cite{TBKN20} randomized delays, which are known to the receiver (they are encoded in a synchronously transmitted preamble), are considered and help the system to utilize the benefit of spatial coupling. Here we note that the paradigm of spatial graph coupling which originated in error-correction coding~\cite{FZ99} is helping to create efficient systems in other areas of communications, including URA systems as in~\cite{TBKN20}.
We start with an overview of asynchronous access in context of literature of non-orthogonal multiple access (NOMA).

\subsection{Asynchronous Random Non-Orthogonal Multiple Access}

Asynchronous transmission is an indispensable component of the future multi-user systems. Analysis and simulation given in~\cite{LWYYBK21} confirm that achievable rates of asynchronous NOMA are higher than that for synchronous NOMA. In~\cite{ZHJ19} analyzing a two user asynchronous NOMA system shows the potential gain due to packet asynchronicity. Using the benefit of sampling diversity and for a sufficiently large frame size asynchronous NOMA outperforms synchronous NOMA. However, the starting times of packets for these systems need to be known by the receiver.

In grant-free asynchronous transmission, detecting packets (finding packets starting times) is a prerequisite for the interference cancellation process to be executed. The conventional way of time acquisition is based on correlation between existing preambles (known signals, attached to the beginning of packets) and the received signal. Improved composite pseudo-noise correlation (ICPC)~\cite{GLHLZ13} is one such method that is used to detect pseudo-noise (PN) preambles. This correlation technique has been analyzed in~\cite{PKHH15}. The analysis includes computation of the probability density function (PDF) of the absolute ICPC metric and its square which can be utilized to estimate real packet and false packet detection probabilities. It is worth to mention that even in synchronous transmission packets may get missed or false packets may be detected. This may happen in sparse code multiple access (SCMA) scheme, in which the decoder works on all active and inactive users~\cite{zte16}. There exist some studies and proposals that aim to decrease false and missed packet probabilities. For instance, in~\cite{KLKWNH17} a new codeword design and receiver processing for SCMA has been proposed which leads to a significant reduction in false alarms and missed packets. 

A combined orthogonal and non-orthogonal preamble structure is introduced in~\cite{KKH21} in order to provide support for higher number of active users and channel estimation for collided users.~\cite{KKH21} also proposes an interference cancellation decoder which covers more bits of the overlapped (interfering) user packets to reduce the interference. The authors simply  consider a decoding window with twice the decoding window length of the conventional grant free NOMA systems with orthogonal frequency division multiplexing (OFDM) signaling.

We also note that several non-orthogonal multiple access (NOMA) approaches developed in the literature are capable to operate in an grant-free mode. A few of existing NOMA methods have some elements of an asynchronous system but a fully asynchronous NOMA system is still a topic of active research. In addition, only a few of the existing NOMA proposals have been utilized in real practical implementations.

One of such systems capable to operate in an asynchronous regime is the multi-user shared access (MUSA)~\cite{MUSA16,YYYL17}. In MUSA the encoded bits of each user are first mapped to a standard symbol constellation such as quadrature phase shift keying (QPSK) and then spread using a complex spreading code of length $M$. The spreading codes need to have low cross-correlation in order for the multi-user decoding to be successful. 

The MUSA can be used in conjunction with a 2-step random channel access (2-step RACH) in the following way. The typical transmission is slotted (synchronous). A pool of spreading codes is established. Each transmitting user chooses one of the spreading codes from the pool randomly and transmits it as a preamble. The transmitted preamble is immediately followed by the payload which is spread with the same spreading code. The receiver starts with detecting the preambles received in the preamble slot and then preforms multi-user decoding of the payloads received in the payload slot. For any successfully detected and decoded packet the receiver returns an  acknowledgement (ACK) signal to the respective transmitter. In case of a collision where the same preamble has been transmitted by more than one user or in case when the preamble detection fails no messages are sent back to the related users. For the case when the preambles are detected but the payload decoding fails the failed users receive a negative acknowledgement (NACK).

While, potentially, there is a theoretic possibility that MUSA-style transmission may be used in an asynchronous fashion and also in conjunction with URA, a number of aspects related to timing acquisition, collision resolution and operation in a highly overloaded user regimes with non-orthogonal spreading codes have to be established. Instead of pursuing this avenue, we started a development of a distinct fully asynchronous URA system that steams from~\cite{TBKN20}, see Section~\ref{sec:FAT}. 
\subsection{Correlation Detection and Machine Learning}
Recently, machine learning (ML) algorithms have found their way into the physical layer of communication systems. ML algorithms can provide an improved signal detection~\cite{OKC17}, and more specifically preamble detection.
In~\cite{RMC20}, the advantage of ML algorithms such as neural networks (NN) and random forest in detecting starting times of packets are studied. A significant improvement over the correlation-based method has been demonstrated. Specially, by using a supervised learning NN method in an asynchronous transmission of short packets, 10 times improvement in reducing both the false and missed packets is reported. In~\cite{NVVD} a 1-dimensional convolutional NN has been used as an alternative to correlation based algorithms for detecting packets in IEEE 802.11. Although the authors claim their NN method is less complex and has a superior performance compared to the conventional methods, an appropriate comparison is not provided.

\subsection{Iterative Timing Acquisition and Multiple Access Decoding}
Packet acquisition is the first step at the  receiver which is required in order to start the decoding of the received packets. The challenge here is to detect as many of the transmitted packets as possible right from the beginning. The interference from other packets as well as noise cause some low power packets to be faded and unseen (missed packets), or some fake packets (false packets) to appear as real ones. These two undesired events worsen the overall performance of the receiver and can even make it fail. According to~\cite{CD94} the overall system capacity is significantly affected by success of the timing acquisition. 

In~\cite{T14} two different approaches for packet acquisition are studied and analyzed. The first approach includes sequential approach where timing acquisition is followed by the multiple access decoding (MUD). In the second approach packet acquisition is performed at the beginning of each decoding iteration of the MUD. According to~\cite{T14} in order to operate closer to the capacity,  iterative packet acquisition and MUD is inevitable. It is also demonstrated in~\cite{SRS08} that for high performance multi-user detection, information sharing between the timing acquisition and the receiver is essential. In~\cite{SRS08} at every iteration some new packets are detected and reported to the interleaved division multiple access (IDMA) detector. In addition,~\cite{RHC08} proposes a similar iterative approach for direct-sequence code-division multiple access (DS-CDMA) and introduces it as a high performance structure. \\

\subsection{Asynchronous URA}
To our knowledge, there is no work on asynchronous URA in the literature to date. In this paper we propose a fully asynchronous URA system in which there is no limitation for the users in terms of the transmission starting times.

Section~\ref{sec:FAT} outlines the work on the fully asynchronous URA case. We consider a system where each packet contains a known preamble and a payload, encoded for unsourced random access, and show our preliminary results. We introduce our receiver structure and an approach to further optimize the system in terms of performance.

\section{Fully Asynchronous URA Transmission} 
\label{sec:FAT}

Our purpose is to design a URA system which makes it possible to detect the starting times of the packets and minimize the probability of a packet collision at the same time. The system structure is as follows.

\subsection{System Model}
\label{sec:SM}
 Fig.~\ref{fig:RandTrans} shows how packets can arrive in a generic multi-user asynchronous system. Each packet includes two sections: preamble, used for timing acquisition, and payload, which carries the information bits.
 We divide the time into intervals of size $N_S$, which are called slots. However, the packets can start at any time. Whenever a packet is generated, it can be transmitted without any delay.

 \begin{figure}[t]
 	\centering
 	\setlength{\unitlength}{1mm}
 	\begin{picture}(140,50)
 	\put(0,0){\includegraphics[scale=0.5]{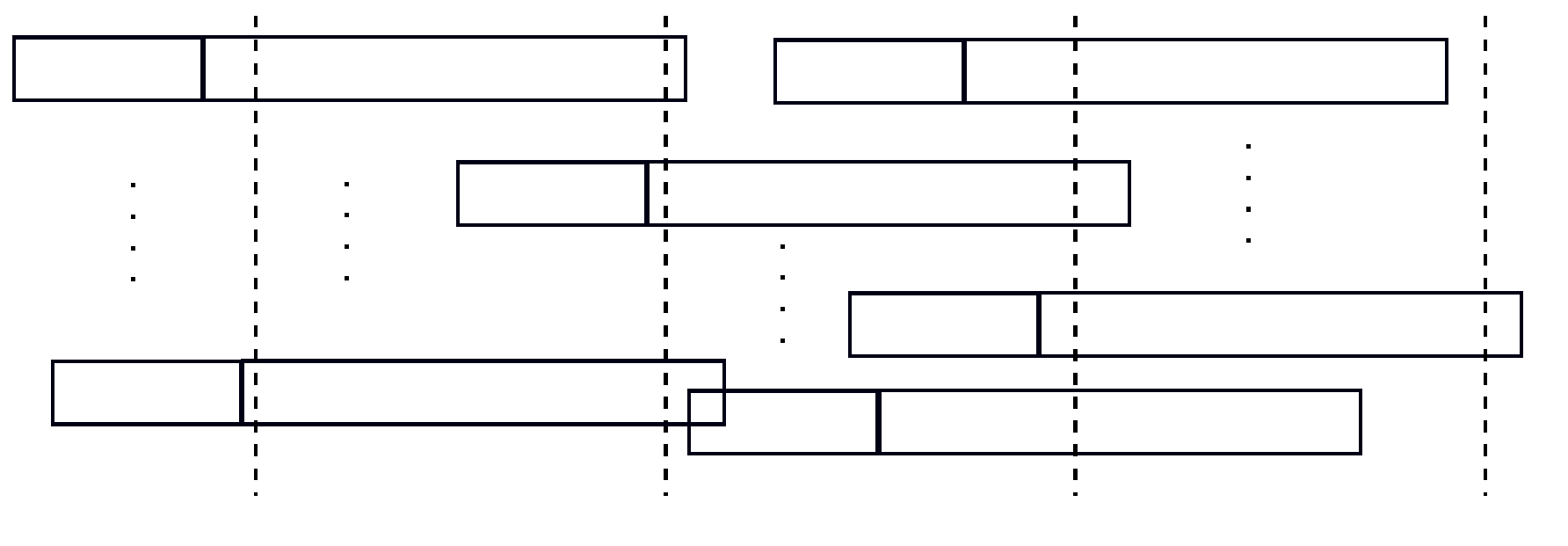}}
 	\put(77.5,1){Slot $j$}
 	\put(35,1){Slot $j-1$}
 	\put(114.5,1){Slot $j+1$}
 	
 	\put(4,45){\scriptsize{$\text{Pre}_{L,j-2}$}}
 	\put(8.5,14){\scriptsize{$\text{Pre}_{1,j-2}$}}
 	
 	\put(80,45){\scriptsize{$\text{Pre}_{L,j}$}}
 	\put(71.5,11){\scriptsize{$\text{Pre}_{1,j}$}}
 	
 	\put(46.5,33){\scriptsize{$\text{Pre}_{L,j-1}$}}
 	\put(87,20){\scriptsize{$\text{Pre}_{1,j}$}}
 	
 	\put(37,45){\scriptsize{$\text{Pay}_{L,j-2}$}}
 	\put(40,14){\scriptsize{$\text{Pay}_{1,j-2}$}}
 	
 	\put(110,45){\scriptsize{$\text{Pay}_{L,j}$}}
 	\put(102,11){\scriptsize{$\text{Pay}_{1,j}$}}
 	
 	\put(79,33){\scriptsize{$\text{Pay}_{L,j-1}$}}
 	\put(117,20){\scriptsize{$\text{Pay}_{1,j}$}}
 	
 	\put(-16,5){\scriptsize{Pay: Payload}}
 	\put(-16,0){\scriptsize{Pre: Preamble}}
 	
 	\end{picture}
 	\caption{Fully asynchronous transmission of packets that consist of preambles and payloads.}
 	\label{fig:RandTrans}
 \end{figure}
   
  We consider a packet format similar to the packet format in a URA with controlled delay transmission~\cite{TBKN20}. Suppose that $L=K_a$ active users are sharing a wireless channel to communicate to a single receiver. Information bits ($K$ bits) of each packet are first encoded for error-correction. This is followed by a binary phase shift keying (BPSK) modulation performed on the codeword bits. This results in a $N$ bit sequence, $\vec{v}_l^{(j)}$, for active user $l$, $l=1,2,\dots, K_a$, packet at time slot $j$:
\begin{align}
	\vec{v}_l^{(j)} = [v_{l,1}^{(j)}, v_{l,2}^{(j)}, \dots, v_{l,N}^{(j)}],
\end{align}
where $v_{l,n}^{(j)} \in \{-1,1\}$, $n=1, 2, \dots, N$ is the $n$th bit of the codeword $\vec{v}_l^{(j)}$. Here, we note again that the packet can start at any time within slot $j$. $N_p$ one, which later will be used to construct the preamble of the packet, are attached to the beginning of the codeword sequence: $ \hat{\vec{v}}_l^{(j)}=[\mathbf{1}, \vec{v}_l^{(j)}]= [\hat{v}_{l,1}^{(j)}, \hat{v}_{l,2}^{(j)}, \dots, \hat{v}_{l,N+N_p}^{(j)}]$, where $\mathbf{1}=[1, 1,\dots, 1]$ is of size $N_p$.
At the next stage of packet modulation the bits of the vector $\vec{v}_l^{(j)}$ are repeated $M$ times, resulting in the repeated-bits sequence
\begin{align}
\bar{\vec{v}}_l^{(j)} = [\bar{\vec{v}}_{l,1}^{(j)}, \bar{\vec{v}}_{l,2}^{(j)}, \dots, \bar{\vec{v}}_{l,N}^{(j)}],
\end{align}
where $\bar{\vec{v}}_{l,n}^{(j)}= [v_{n,1}^{(j)}, v_{n,1}^{(j)}, \dots, v_{n,1}^{(j)}]$ is an $1 \times M$ vector.

Then, a pair of a permutation and signature is chosen from a pool. The permutation is designed such that the first $MN_p$ positions, related to the preamble, do not change their value (i.e. are not permuted), but the remaining $MN$ permutation positions are (pseudo) randomly permuted. The signature is a randomly chosen sequence, where each element is chosen according to Bernoulli distribution from $\{-1/\sqrt{M}, 1/\sqrt{M}\}$. The values $-1/\sqrt{M}$ and $1/\sqrt{M}$ ensure that the energy of each data bit (information bits after possible FEC encoding) is 1. The permuted bit sequence, $\bar{\bar{\vec{v}}}_{l}^{(j)}$, is multiplied bit by bit with the signature sequence to provide $\tilde{\vec{v}}_{l}^{(j)}$, which is then transmitted over the channel. 

An event when two packets are encoded using the same (signature, permutation) pair and are transmitted by two different users at the same time is considered to be a collision. In case of no collisions it would be sufficient to use a single signature-permutation pair to modulate all packets and a use of a permutation-signature pool would not be required. The MUD at the receiver would be able to separate the packets as we have confirmed by the numerical results. The small preamble pool used in our system is introduced for the purpose of separating the simultaneously arriving packets. 



Therefore, packets of active users started at slot $j-1$ and $j$  accompanying with the channel's additive white Gaussian noise (AWGN) lead to a common received signal $\vec{y}^{(j)}=[y_1^{(j)}, y_2^{(j)}, \dots, y_{MN}^{(j)}]$ in slot $j$, where 
\begin{align}
y_t^{(j)} = \sum_{l=1}^{K_T} \left[\tilde{v}_{l,t+2MN-\tau_l^{(j-2)}+1}^{(j-2)}\text{I}(l^{(j-2)}) + \tilde{v}_{l,t+MN-\tau_l^{(j-1)}+1}^{(j-1)}\text{I}(l^{(j-1)}) + \tilde{v}_{l,t-\tau_l^{(j)}+1}^{(j)}\text{I}(l^{(j)})\right] + \n_t, \label{eq:ytj}
\end{align}
where $K_T$ is the total number of users. $\tau_l^{(j)}$ is the starting point of packet $l$ in slot $j$. $\text{I}(l^{(j)})$ is the indicator function that is 1 if packet $l$ has started in slot $j$, and 0 otherwise. $n_t$ is the independent and identically distributed (iid) zero mean AWGN sample with variance $\sigma^2$. Note that according to Fig.~\ref{fig:RandTrans}, packet that starts at slots $j-2$, $j-1$, $j$ can overlap with the $j$th slot.


\subsection{Arrival Times Detection} \label{sec:ATD}
Clearly, acquisition of starting times of the packets is necessary to start and perform the interference cancellation and error correction process at the receiver. There are two usual different approaches to detect starting times. One approach is based on correlation between a known preamble sequence and the received sequence (and every shift of it), and the other is based on energy measurement taken from a permuted, de-spread received sequence. However, in this paper we develop our system using the preamble-based method.

\subsubsection{Preamble-based Arrival Time Detection} \label{sec:PATD}
 Based on this method, (see Figure~\ref{fig:RandTrans}) we will select the signature sequences to form a set of known sequences (preambles) with a low cross-correlation, at the beginning of each packet (first $MN_p$ bits). A sufficiently long randomly chosen sequence with entries taken from the set $\{-1/\sqrt{M}, 1/\sqrt{M}\}$ can be used as a preamble, which is automatically constructed by multiplying the signature sequence by the repeated-bits sequence, explained in system model section. The preamble in the proposed work is different than the preamble in~\cite{TBKN20} where a compressed sensing method is employed to construct the preamble for the purpose of synchronous URA. After attaching the preamble the packet size changes to $M(N_p+N)$, where the payload length is $MN$. The formation of the payload ($MN$ bits) is  explained in the system model section.

Before starting the interference cancellation process, the receiver first computes the correlation between the received sequence and all possible preamble signals, which are included in the pool, for every symbol shift (or every sample in practice). Therefore, the correlation value for $p$th preamble in the pool, $\vec{P}^{(p)} = [P_1^{(p)}, P_2^{(p)}, \dots, P_{MN_p}^{(p)}]$, is
\begin{align}
	\text{C}_{p,t}^{(j)} = \sum_{t_{\text{det}}=t}^{\text{min}(t+MN_p-1,MN)} y_{t_{\text{det}}}^{(j)}P_{t_{\text{det}}-t+1}^{(p)}+\sum_{t_{\text{det}}=1}^{MN_p-MN+t-1} y_{t_{\text{det}}}^{(j+1)}P_{MN-t+t_{\text{det}}+1}^{(p)}.
\end{align}
where $\text{min}(a,b)=a$ if $a \leq b$, or $b$, otherwise. The timing acquisition correlator detects a packet arrival time every time $\text{C}_{p,t}^{(j)}$ is larger than a pre-selected threshold value ($\text{Thr}_{\text{corr}}$). However, since there is no guarantee that the starting time detection goes perfectly well, a recursive process between the interference cancellation (i.e. multi-user detection of payloads) and starting time detection is necessary. To optimize the system, the recursive process can run until the interference and noise power approaches the noise power $\sigma^2$ which guarantees successful detection of packet arrival times. Afterwards, the interference cancellation will occupy  $100\%$ of the processing power for the remaining decoding iterations. Fig.~\ref{fig:ITD/MUD-CRC} shows the entire receiver diagram and is explained later. The details of the MUD process are given in \cite{TBKN20}.

\subsubsection{Threshold Setting} \label{sec:ThreSet}
A very important factor in detecting packets is the selected threshold value for timing acquisition correlator. Later in Section~\ref{MoreGranAnalysis}, we show that how a threshold, selected for the correlator, controls the probabilities of missed and false packets. Higher thresholds lead to less false packets and at the same time more missed packets. With lower thresholds we detect more real packets, but let the system collect more false packets. This can adversely overwhelm the receiver with a lot of extra interference. In order to optimize the process, we consider an adaptive way of setting the threshold. According to Fig.~\ref{fig:ThreSet} we divide time frame into small segments and set the thresholds for the segments in such a way that the segments with lower interference level have lower threshold values and the segments with higher interference level have higher thresholds. In fact, with lower interference the probability of detecting false packets drops significantly, allowing to set lower threshold and detect a significant number of real packets. Because of coupling way of transmission, iterative interference cancellation leads to the least interference power and the least threshold value for the leftmost of decoding window. The interference increases as we move to the right. For instance, at iteration $i$ for $\tau_2>\tau_1$ the threshold is higher for the segment $\tau_2$: $T_{\tau_2}^{(i)} \geq T_{\tau_1}^{(i)}$. Moreover, this helps a certain segment (for example the segment $\tau$) to have a lower or at most the same threshold value at iteration $i_2$ compared to that at iteration $i_1$, if $i_2 \geq i_1$: $T_{\tau}^{(i_2)} \leq T_{\tau}^{(i_1)}$
\begin{figure}[h]
	\centering
	\setlength{\unitlength}{1mm}
	\begin{picture}(150,115)
	\put(5,5){\includegraphics[scale=0.4]{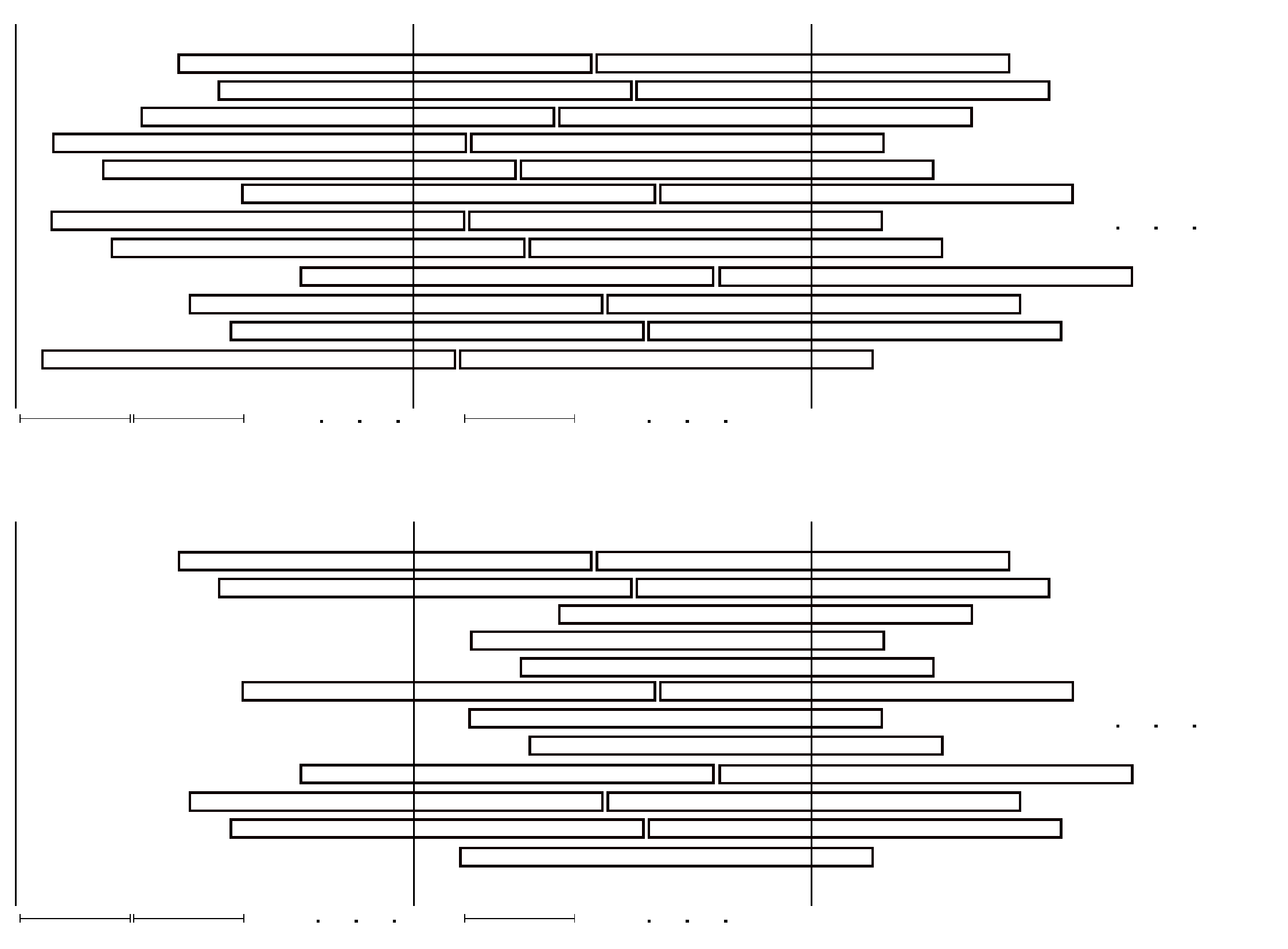}}
	
	\put(-10,117){ Iteration $i_1$}
	\put(9,62){ $T_1^{(i_1)}$}
	\put(17.5,62){ $\leq$}
	\put(22,62){ $T_2^{(i_1)}$}
	\put(30.5,62){ $\leq$}
	\put(45.5,62){\dots}
	\put(57,62){ $\leq$}
	\put(61,62){ $T_{\tau}^{(i_1)}$}
	\put(69,62){ $\leq$}
	
	\put(-10,58){ Iteration $i_2$}
	\put(9,3){ $T_1^{(i_2)}$}
	\put(17.5,3){ $\leq$}
	\put(22,3){ $T_2^{(i_2)}$}
	\put(30.5,3){ $\leq$}
	\put(45,3){\dots}
	\put(57,3){ $\leq$}
	\put(61,3){ $T_{\tau}^{(i_2)}$}
	\put(69,3){ $\leq$}
	
	\end{picture}
	\caption{Adaptive threshold setting: the lower the interference, the lower the threshold value.}
	\label{fig:ThreSet}
\end{figure}

\subsubsection{Iterative Timing Detection/MUD with Cyclic Redundancy Check} \label{sec:ITD/MUD-CRC}
When the number of false packets increases throughout the iterations of timing acquisition and MUD, the resulting interference also increases and leads to performance degradation. In some severe situations with many false packets the receiver can even stop working entirely. Therefore, according to Fig.~\ref{fig:ITD/MUD-CRC}, we consider a modified structure of our system which we call iterative timing detection/MUD with cyclic redundancy check (ITD/MUD-CRC). According to Fig.~\ref{fig:PackStruc} the new approach needs the data bits to be encoded using a CRC code. The CRC bits can help to distinguish between the real and false packets in the receiver. We apply iterative timing acquisition and MUD for several iterations. Then, the packets which pass the CRC check and their related interference are removed from the received sequence. Afterwards, another course of iterative timing acquisition/MUD and CRC step is performed and so on. 
\begin{figure}[h]
	\centering
	\setlength{\unitlength}{1mm}
	\begin{picture}(125,85)
	\put(0,0){\includegraphics[scale=0.35]{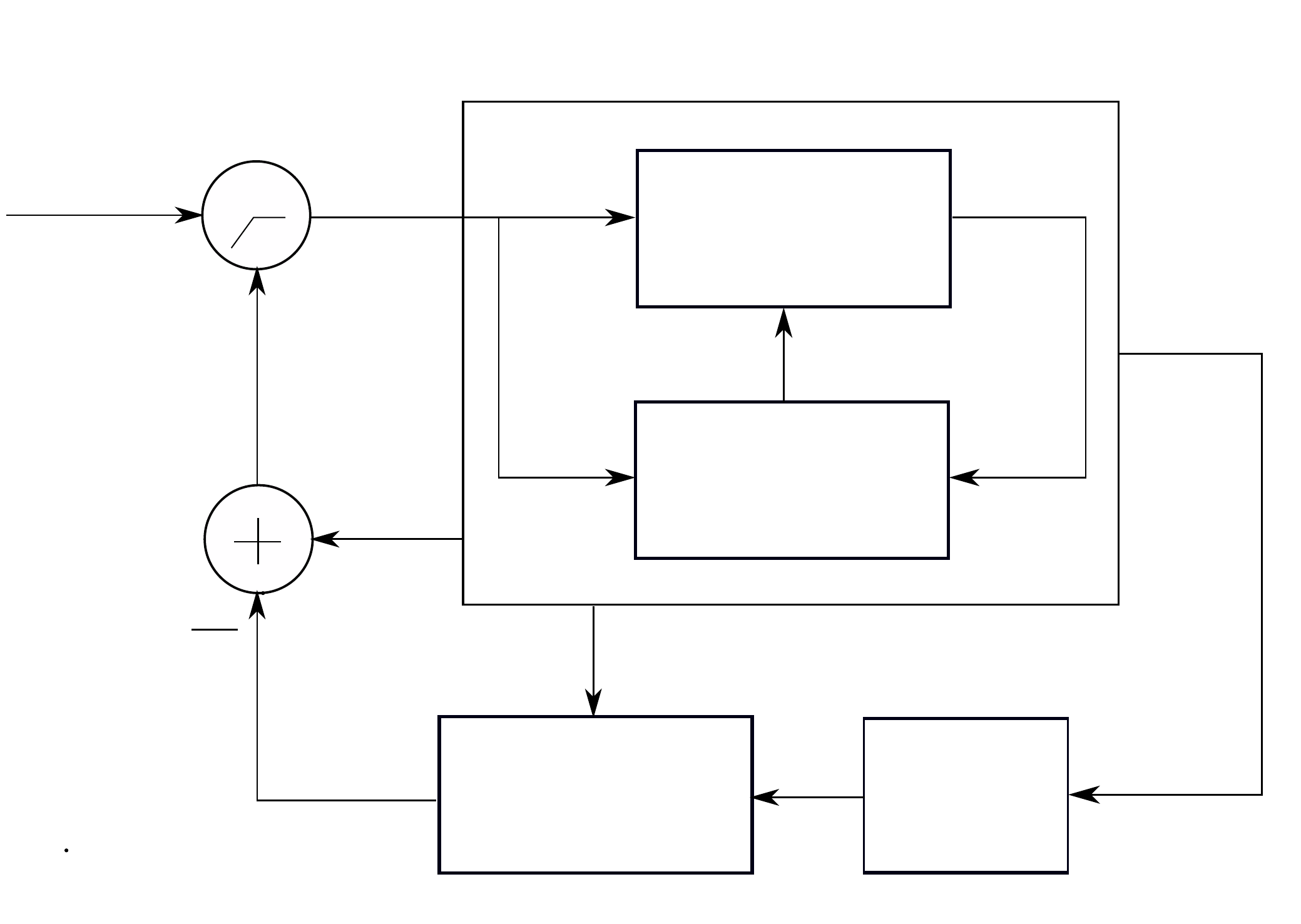}}
	
	\put(64.2,68){Interference}
	\put(64,62){Cancellation}
	\put(63.5,43.5){Arrival Time}
	\put(66.3,37.5){Detection}
	\put(-9,69){Received Signal}
	\put(86.8,14.1){CRC}
	\put(85.8,8.1){Check}
		\put(50,14.1){Packet}
	\put(48.5,8.1){Removal}
		\put(120.5,32){Estimated}
	\put(123,27){Packets}
	
	\put(6.8,55){Updated}
	\put(6.7,50){Received}
	\put(6.4,45){Sequence}
	
	\put(31,32){Last}
		\put(27.5,27){Modified}
	\put(27.5,22){Received}
	\put(30,17){Signal}
	
		\put(1,23){Interference}
	\put(1,18){of Detected}
	\put(4.6,13){Packets}
	\end{picture}
	\caption{Iterative timing detection and multi-user decoding with CRC.}
	\label{fig:ITD/MUD-CRC}
\end{figure}

\begin{figure}[h]
	\centering
	\setlength{\unitlength}{1mm}
	\begin{picture}(125,35)
	\put(16,0){\includegraphics[scale=0.5]{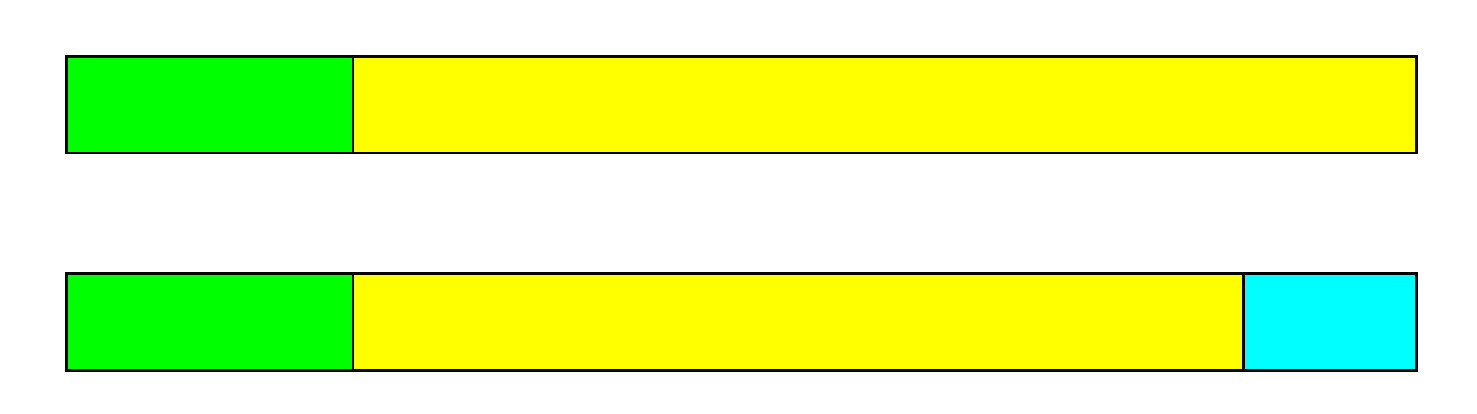}}
	
		\put(25.2,25){Preamble}
	\put(82,25){Payload}
	\put(-18,25){(a) Regular }
	
		\put(25.2,6.5){Preamble}
	\put(76,6.5){Payload}
		\put(124,6.5){CRC}
			\put(-18,6.5){(b) ITD/MUD-CRC }
		\end{picture}
	\caption{Packet structure for (a) the regular and (b) ITD/MUD-CRC cases.}
	\label{fig:PackStruc}
\end{figure}

\subsubsection{Collision Reduction} \label{sec:CR}
As we mentioned above, all packets may have payloads modulated with a single pair of permutation and signature sequences in case they arrive at distinct times. This does not lead to any performance degradation. However, in a real transmission scenario there is no control on the times when the active users may transmit, and the probability that at least two packets collide (arrive at the same time) may be high in some parameter settings. In the proposed URA system the same permutation, signature, and arrival time for two packets means collision. Collision has two harmful effects on the performance of the entire system. On one hand, the collided packets will not be decoded successfully. On the other hand, the residual interference resulting from the undetected packets will degrade the performance for the other packets. 

To partially overcome the problem of collisions in the proposed system, we consider $L_p$ pairs of permutation and signature sequences in a pool and allow each user who transmits a packet to choose a permutation and signature pair randomly from the pool before the transmission. As a result, there will be $L_p$ unique preambles as well (see Section~\ref{sec:SM}, which describes packet construction). Therefore, if two packets arrive at the same time there is still chance that their data gets decoded successfully provided that their preambles (and hence permutation-signature pair) are different.

\section{Approach to Analysis of Iterative Receiver Processing}\label{MoreGranAnalysis}
In this section, we outline an approach to the analysis of the iterative receiver. To start, we introduce the following lemma. Let us consider a fixed threshold equal to $85\%$ of $N_p$ (auto-correlation of a preamble bits) for simplicity.

\begin{lemma} Consider $L$ active users of power $1/M$, where the channel is an additive white Gaussian noise (AWGN) channel with noise power $\sigma^2$, and a preamble is of length $MN_p$. The signal to interference and noise ratio (SINR) of the preamble correlator output equals
	\begin{align}
	\text{SNR}_{\text{corr}} = \frac{(N_p)^2}{(L-1)N_p+\sigma^2N_p}=\frac{N_p}{(L-1)+\sigma^2}.\label{eq:SNRCor}
	\end{align}
\end{lemma}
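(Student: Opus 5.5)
We compute the correlator output $\text{C}_{p,t}$ at the true arrival time $t=\tau_l+1$ of a fixed user $l$ with preamble $\vec{P}^{(p)}$. We then split it into three parts: the desired term from user $l$, the interference terms from the other $L-1$ users, and the noise term. The SINR is defined as (mean of the desired part)$^2$ divided by the variance of interference plus noise. All moments are taken over the random signatures, the random payload bits and the random arrival offsets of the interferers, together with the AWGN.

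\emph{Desired term.} With the normalization used in the statement, each preamble chip of user $l$ carries unit energy per preamble bit. The matched correlation over the $MN_p$ chips therefore sums $N_p$ bit energies of value $1$, so the desired component equals the autocorrelation $N_p$. This is consistent with the threshold being quoted as a fraction of $N_p$. Squaring gives the numerator $(N_p)^2$.

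\emph{Interference term.} For each other user $l'\neq l$, the chips it contributes inside the correlation window are products of its own random signature entries with data or preamble bits. Because of its arbitrary offset, these chips come from some shifted portion of its preamble or payload. Multiplied by the independent entries of $\vec{P}^{(p)}$, they form a sum of $MN_p$ zero-mean, pairwise uncorrelated terms. Uncorrelatedness follows because the signature chips are i.i.d. equiprobable $\pm 1/\sqrt{M}$ and are independent across users. This holds even when two users chose the same preamble, provided the offsets differ. Each term has variance equal to the product of the two chip energies. Summing over the window gives a variance of $N_p$ per interferer in the normalization that makes the desired term equal to $N_p$. Independence across users adds these, giving $(L-1)N_p$. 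The noise term $\sum n_t P_{\cdot}^{(p)}$ is Gaussian with zero mean, and its variance is $\sigma^2$ times the preamble energy. In the same normalization this is $\sigma^2 N_p$. Cross terms vanish by independence and zero mean. The denominator is therefore $(L-1)N_p+\sigma^2N_p$, and dividing through by $N_p$ gives \eqref{eq:SNRCor}.

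\emph{Main obstacle.} The delicate step is the bookkeeping of normalizations. The statement says users have power $1/M$ per chip and that the signature entries are $\pm1/\sqrt{M}$. I would fix one consistent scaling: chip amplitude $1/\sqrt{M}$, with the correlator taken against $\sqrt{M}\,\vec{P}^{(p)}$, or equivalently per-bit despreading. Under that scaling I would check that the desired, interference and noise terms all carry the same factor, so it cancels in the ratio. A second point is the case where the correlation window straddles the slot boundary, which is the second sum in $\text{C}_{p,t}^{(j)}$. This case does not change the variance computation, since it only re-indexes the chips. I would also state explicitly that the lemma assumes the interferers are not chip-aligned collisions with the same preamble, and that residual correlations are neglected. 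This is the standard Gaussian/uncorrelated-chip approximation, and the lemma implicitly uses it.
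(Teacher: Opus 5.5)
Your decomposition (desired term equal to the autocorrelation $N_p$, interference as a sum of pairwise uncorrelated chip products over the $L-1$ other users, noise variance $\sigma^2$ times the preamble energy) is the same as in the paper's Appendix~A. The interference step, however, is wrong as you state it. Fix the scaling you chose: all chips $\pm1/\sqrt{M}$ and the correlator taken against $\vec{P}^{(p)}$, which is what makes the desired term $N_p$ and the noise variance $\sigma^2N_p$. In that scaling, each of the $MN_p$ products of a reference chip $\pm1/\sqrt{M}$ with an interfering chip $\pm1/\sqrt{M}$ has variance $1/M^2$. So a single interferer contributes $MN_p/M^2=N_p/M$, not $N_p$. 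You get $N_p$ per interferer only if you correlate against $\sqrt{M}\,\vec{P}^{(p)}$. In that scaling, though, the desired term becomes $\sqrt{M}N_p$ and the noise variance $M\sigma^2N_p$, so you have taken the interference from one scaling and the signal and noise from the other. The bookkeeping check you flag as the main obstacle is where the argument fails: the interference carries an extra factor $1/M$ relative to the other two terms, and it does not cancel in the ratio.

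Done consistently, the argument gives
\[
\text{SNR}_{\text{corr}}=\frac{N_p^2}{(L-1)N_p/M+\sigma^2N_p}=\frac{N_p}{(L-1)/M+\sigma^2}.
\]
A scale-free count confirms this: per chip, the desired user has power $1/M$, the interferers $(L-1)/M$ and the noise $\sigma^2$, and coherent combining over $MN_p$ chips multiplies the per-chip SINR by $MN_p$. This matches the displayed formula only when $M=1$, or if each interferer had per-chip power $1$ rather than $1/M$.

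The paper's own proof makes the same slip. It evaluates $\mathbb{E}\sum_{i\neq j}\sum_{k=1}^{MN_p}a_{jk}^2a_{ik}^2$ as $(L-1)N_p$, whereas with $a_{jk}^2=a_{ik}^2=1/M$ it equals $(L-1)N_p/M$. The paper's later iterative analysis agrees with the corrected expression, not the stated one. There the miss probability is written with $\sqrt{N_p x_\tau^{(i)}}$, and $x_\tau^{(i)}$ is built from the load $\alpha=L/M$ (roughly $\alpha+\sigma^2$ before any cancellation).

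So your proposal reproduces the paper's argument, error included. To make it correct, either change the conclusion to $N_p/\bigl((L-1)/M+\sigma^2\bigr)$ or change the power model of the interferers.
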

\begin{proof}
	See Appendix~\ref{Ap:Cor}, where we derive formulas for preamble signal's cross-correlation with other signals and the noise, as well as its auto-correlation.
\end{proof}  

As we discussed earlier, it may happen that the preamble detection process will not be able to find all the packets. Also, the detector may detect extra (false) packet(s) by mistake. The probabilities of missing a packet ($p_\text{miss}$) or detecting a false packet ($p_\text{false}$) in the first iteration depend on the $\text{SNR}_{\text{corr}}$ value. Later it will depend on $\text{SNR}_{\text{corr}}^{(i)}$ that will change with the iteration index, $i$. For a large value of $L=K_a$, the number of active users, we can assume, according to the central limit theorem, that the interference and noise is Gaussian with the power equal to $(L-1)N_p+\sigma^2N_p$ and with zero mean, that is $ \mathcal{N}(0,\,(L-1)N_p+\sigma^{2}N_p)$ at the first iteration before the MUD starts. 

False detection happens when there is no packet in the received sequence that starts at a time while the correlation value is larger than $\text{Thr}_{\text{corr}}$.  Therefore, 
\begin{multline}
p_{\text{false}} = \int_{\text{Thr}_{\text{corr}}}^{\infty}\frac{1}{\sqrt{2\pi ((L-1)N_p+\sigma^2N_p)}}\exp \left(\frac{-x^2}{2(L-1)N_p+2\sigma^2N_p} \right) \text{d}x = \\
Q \left( \frac{\text{Thr}_{\text{corr}}}{\sqrt{ (L-1)N_p+\sigma^2N_p}}\right), \label{eq:pfalse_pre}
\end{multline}
where 
\begin{align}
Q(x) = \int_{x}^{\infty}\frac{1}{\sqrt{2\pi}}\exp \left(\frac{-y^2}{2} \right)\text{d}y.
\end{align}
A packet gets missed when it has been transmitted, but the respective correlation value is lower than $\text{Thr}_{\text{corr}}$. Hence, $p_\text{miss}$ can be obtained from 
\begin{multline}
p_{\text{miss}} = \int_{-\infty}^{\text{Thr}_{\text{corr}}}\frac{1}{\sqrt{2\pi ((L-1)N_p+\sigma^2N_p)}}\exp \left(\frac{-(x-N_p)^2}{2(L-1)N_p+2\sigma^2N_p} \right) \text{d}x =\\
1 - Q \left( \frac{\text{Thr}_{\text{corr}}-N_p}{\sqrt{ (L-1)N_p+\sigma^2N_p}}\right). \label{eq:pmiss_pre}
\end{multline}
	
 According to the approach proposed in~\cite{TS19}, if we divide each packet into $2W+1$ sections and all arriving packets, i.e. active users ($L=K_a$), to $2W+1$ different groups, and assume that the groups of packets arrive with equi-distant delays, with $1$ to be the power of each bit ($\frac{1}{M}$ for each replication bit) the interference-and-noise power in position $\tau$ at iteration $i$ equals 
\begin{align}
x_{\tau}^{(i)} = \frac{\alpha}{2W+1} \sum_{\tau_1=-W}^{W} g_{\text{mse}} \left( \frac{1}{2W+1}\sum_{\tau_2=-W}^{W}\frac{1}{x_{\tau+\tau_1+\tau_2}^{(i-1)}}\right)+\sigma^2, \label{eq:xitauo}
\end{align}
where $\alpha=\frac{L}{M}$, $\sigma^2$ is the channel noise power and
\begin{align}
g_{\text{mse}}= \mathbb{E}\left[(1-\text{tanh}(x+\zeta\sqrt{x})\right], \quad \zeta \sim \mathcal{N}(0,1).
\end{align}
is the mean-square error of estimated $+1$ or $-1$ in AWGN of power $\sigma^2$.
We provide an analysis for our proposed system based on Equation~\ref{eq:xitauo}. Two approaches are considered: the regular (iterative timing acquisition and MUD) and the ITD/MUD-CRC approach.

By normalizing the power of each bit to 1 for the regular approach, we find that the interference and noise power in position $\tau$ after interference cancellation at iteration $i$ and before timing acquisition at iteration $i$ equals 
\begin{multline}
x_{\tau}^{(i)} = \frac{1}{2W+1}  \sum_{\tau_1=-W}^{W} \alpha^{(i-1)}_{\tau+\tau_1} g_{\text{mse}} \left(\frac{1}{2W+1}\sum_{\tau_2=-W}^{W}\frac{1}{x_{\tau+\tau_1+\tau_2}^{(i-1)}}\right)\\
+\frac{1}{2W+1} \sum_{\tau_1=-W}^{W}  \frac{\alpha-\alpha^{(i-1)}_{\tau+\tau_1}}{2W+1}\sum_{\tau_2=-W}^{W}\frac{1}{x_{\tau+\tau_1+\tau_2}^{(1)}} + \sum_{i'=1}^{i} y_{\tau}^{(i')} + z_{\tau} + \sigma^2, \label{eq:xitau}
\end{multline}
where $y_{\tau}^{(i')}$ is the interference power from false packets in position $\tau$ at iteration $i'$. We do summation over false packet powers from previous and current iterations because they accumulate from an iterations of timing detection to the other one. $z_{\tau}$ is the collision induced interference power. $\alpha_{\tau}^{(i-1)}$ is the detected load in position $\tau$ at iteration $i-1$. Note that the first term in Equation~\ref{eq:xitau} is related to the interference remaining from the detected packets, and the second term represents the interference of the undetected packets before timing acquisition is done at iteration $i$. All such packets have $x_{\tau'}^{(1)}$ power value (not improved throughout the iterations), since their starting value does not change. For simplicity, we do not distinguish between the newly detected packets and the packets previously detected. 

We can obtain $\alpha^{(i)}_{\tau}$ in
\begin{align}
\alpha^{(i)}_{\tau} = \alpha^{(i-1)}_{\tau} + \left(\alpha-\alpha^{(i-1)}_{\tau}\right)p_{\text{success},\tau}^{(i)}\label{eq:alpha1}
\end{align}
where $p_{\text{success},\tau}^{(i)}$ is the probability that a packet is successfully detected in position $\tau$ at iteration $i$, which can be computed from
\begin{align}
p_{\text{success},\tau}^{(i)} = 1 - p_{\text{miss},\tau}^{(i)}
\end{align}
with $p_{\text{miss},\tau}^{(i)}$ to be the miss probability of packets in position $\tau$ at iteration $i$. To obtain $p_{\text{miss},\tau}^{(i)}$ we refer to Equation~\ref{eq:pmiss_pre} and rewrite it appropriately (for the case where each bit has the power 1) and to include positions and iterations. Therefore,
\begin{align}
p_{\text{miss},\tau}^{(i)} = 1 - Q \left( \frac{\text{Thr}_{\text{corr}}^{(i)}-N_p}{\sqrt{N_p x_{\tau}^{(i)}}}\right), \label{eq:pmiss_pre_positer}
\end{align}

Unfortunately,~(\ref{eq:alpha1}) does not appropriately estimate the value of $\alpha_{\tau}^{(i)}$. There is always an amount to be added to $\alpha_{\tau}^{(i-1)}$ since $p_{\text{success},\tau}^{(i)}$ is not 0 for $x_{\tau}^{(i)}<\infty$. Therefore, as an alternative to~(\ref{eq:alpha1}) we use 
\begin{align}
\alpha^{(i)}_{\tau} = \alpha p_{\text{success},\tau}^{(i)}\label{eq:alpha2}
\end{align}

We compute $z_{\tau}$ in
\begin{align}
z_{\tau} = \frac{1}{M}\sum_{\tau'=\tau-2W}^{\tau}N_{\textrm{col},\tau'}, 
\label{eq:ztau}
\end{align}
where $N_{\textrm{col},\tau'}$ is the average number of collided packets in position $\tau'$, obtained from
\begin{align}
N_{\textrm{col},\tau'} = 2p_{\textrm{col:2},\tau'}+3p_{\textrm{col:3},\tau'}+...+\alpha_{\tau'}^{(1)}Mp_{\textrm{col:}\alpha_{\tau'}^{(1)}M,\tau'}.
\label{eq:Ncol}
\end{align}
In~(\ref{eq:Ncol}), $p_{\textrm{col:j},\tau'}$, with $j=1,2,\dots,\alpha_{\tau'}^{(1)}M$, is the probability that only $j$ packets collide in position $\tau$, while no other packets collide in that position. It can be computed experimentally using a computer simulation, or using the following Lemma.

\begin{lemma} Consider $L_{\tau}$ packets which can start at any time instant within the discrete interval $\tau$ with the length $n_\tau$. The probability that $q$ of these packets start concurrently (at the same time instant) in $\tau$ is:
\begin{align}
 p_{\textrm{col:q},\tau} = \frac{(L_{\tau}-q+1)!{\binom{n}{L_{\tau}-q+1}}\binom{L_{\tau}}{q}}{n_{\tau}^{L_{\tau}}}
\label{eq:p_col}
\end{align}
\end{lemma}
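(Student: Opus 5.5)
The plan is a direct counting argument under the natural probabilistic model implicit in the statement. Each of the $L_{\tau}$ packets picks its starting instant independently and uniformly at random among the $n_{\tau}$ discrete instants of interval $\tau$, so every one of the $n_{\tau}^{L_{\tau}}$ assignments of starting times to packets is equally likely. This explains the denominator. It then remains to count the favourable assignments. (I read the $n$ in the binomial coefficient as $n_{\tau}$.)

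\textbf{The event.} Following the wording used for~(\ref{eq:Ncol}), I take the event to be: exactly one group of $q$ packets starts at a common instant, and no other collision occurs. In other words, the remaining $L_{\tau}-q$ packets start at pairwise distinct instants, all different from the instant of the colliding group. I take $q\ge 2$, since $q=1$ does not describe a collision.

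\textbf{The count.} I would build each favourable assignment in two steps.
\begin{itemize}
\item First, choose which $q$ packets form the colliding group, in $\binom{L_{\tau}}{q}$ ways.
\item Then treat that group as a single super-packet. Together with the other $L_{\tau}-q$ packets this gives $L_{\tau}-q+1$ distinguishable objects. They must be placed injectively into the $n_{\tau}$ instants, which can be done in
\begin{align*}
n_{\tau}(n_{\tau}-1)\cdots(n_{\tau}-L_{\tau}+q)=(L_{\tau}-q+1)!\binom{n_{\tau}}{L_{\tau}-q+1}
\end{align*}
ways: choose the occupied instants, then order the objects onto them.
\end{itemize}
Multiplying the two counts and dividing by $n_{\tau}^{L_{\tau}}$ gives~(\ref{eq:p_col}).

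\textbf{The main obstacle: showing the count is a bijection.} I need the construction to produce each favourable assignment exactly once. Given a favourable assignment with $q\ge 2$, the set of packets sharing an instant is uniquely determined, because all other instants carry at most one packet. So the map from (group choice, injective placement) to assignments is injective, and its image is precisely the event. No inclusion--exclusion is needed.

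Two cases need care. If $L_{\tau}-q+1>n_{\tau}$, the binomial coefficient vanishes, which correctly gives probability zero. For $q=1$ the same construction would overcount the all-distinct event by a factor $L_{\tau}$. I would therefore state explicitly that the formula is intended for $q\ge 2$, which is the only range used in~(\ref{eq:Ncol}).
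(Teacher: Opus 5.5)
Your proof is correct and is essentially the paper's argument: the paper fixes the common instant of the $q$ colliding packets, counts the $(n_\tau-1)(n_\tau-2)\cdots(n_\tau-L_\tau+q)$ injective placements of the other packets, sums over the $n_\tau$ instants and multiplies by $\binom{L_\tau}{q}$, which is your super-packet count done in two stages. The paper does not address two points you add: your check that the colliding group is uniquely determined, so nothing is double-counted, and your restriction to $q\ge 2$, since at $q=1$ the formula overcounts by a factor $L_\tau$ and the paper's definition of $p_{\textrm{col:}j}$ nominally includes $j=1$.
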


\begin{proof}
	See Appendix~\ref{Ap:p_col}.
\end{proof}  

We have only considered two packet collisions because of their dominant probabilities compared to the other cases (three packet collisions, four packet collisions, etc.). We have computed and for different $L=K_a$ values, which are shown in Table. Note that by selecting $W=5$, we divide each packet into 11 segments. Therefore, $L_{\tau}=\nint{\frac{L}{11}}$, $n_{\tau}=\nint{\frac{B_{\text{p}}}{11}}$, where $\nint{}$ returns the nearest integer value of its argument. For, $B_{\text{p}}=10000$, $n_{\tau}=909$, and for $B_{\text{p}}=30000$, $n_{\tau}=2727$. Tables~\ref{tab:Ncol_10k} and~~\ref{tab:Ncol_30k}, respectively, provide the values of $p_{\textrm{col:2}}$ and $N_{\textrm{col},\tau}$ for these two cases.

\begin{table}[!h]
\tiny
\centering
\caption{Average number of collided packets in  position $\tau$ for $B_{\text{p}}=10000$}
\begin{tabular}{| c | c c c c c c c c c c c c c|}
 \hline
 $L$ & 30 & 50 & 70 & 80 & 90 & 100 & 110 & 125 & 140  & 150 & 165 & 170 & 180 \\ 
 \hline
 $L_{\tau}$ & 3 & 5 & 6 & 7 & 8 & 9 & 10 & 11 & 13 & 14 & 15 & 15 & 16 \\
 \hline
 $p_{\textrm{col:2}}$ & 0.003 & 0.0109 & 0.016 & 0.0227 & 0.03 & 0.0384 & 0.0476 & 0.0576 & 0.0798 & 0.0918 & 0.1045 & 0.1045 & 0.1175\\
 \hline
 $N_{\textrm{col},\tau}$ & 0.006 & 0.0218 & 0.032 & 0.0454 & 0.06 & 0.0768 & 0.0952 & 0.1152 & 0.1595 & 0.1837  & 0.2089 & 0.2089 & 0.2351\\
 \hline
\end{tabular} \label{tab:Ncol_10k}
\end{table} 

\begin{table}[!h]
\tiny
\centering
\caption{Average number of collided packets in  position $\tau$ for $B_{\text{p}}=30000$}
\begin{tabular}{| c | c c c c c c c c c c c c c c c |}
 \hline
 $L$ & 30 & 50 & 70 & 90 & 110 & 125 & 140  & 165 & 170  & 200 & 280 & 400 & 420 & 440 & 500\\
 \hline
 $L_{\tau}$ & 3 & 5 & 6  & 8  & 10 & 11 & 13 & 15 & 15 & 18 & 25 & 36 & 38 & 40 & 45\\
 \hline
  $p_{\textrm{col:2}}$ & 0.0011 & 0.0037 & 0.0055  & 0.0102  & 0.0163 & 0.0198 & 0.0279 & 0.0372 & 0.0372 & 0.0533 & 0.0994 & 0.1855 & 0.2017 & 0.2177 & 0.2561 \\
 \hline
 $N_{\textrm{col},\tau}$ & 0.0022 & 0.0073 & 0.011  & 0.0204  & 0.0326 & 0.0397 & 0.0558  & 0.0745 & 0.0745 & 0.1067 & 0.1988 & 0.3711 & 0.4034 & 0.4354 & 0.5122 \\
 \hline
\end{tabular} \label{tab:Ncol_30k}
\end{table} 

Note that collisions introduce an additional fixed interference power term to the total interference and noise power throughout the iterations. On the other hand, the packets which are included in the collision term must be excluded from the rest of the transmitted packets~(\ref{eq:xitau}). Therefore, we adjust the value of $\alpha$ in accordance with
\begin{align}
\alpha = \alpha -\frac{N_{\textrm{col},\tau}}{M}.
\label{eq:alpha_mod}
\end{align}

\section{Numerical Results}\label{sec:numericalResults}
For different preamble lengths we have computed the maximum number of active users, $L=K_a$, for which the timing acquisition and interference cancellation units work successfully together to decode the data of the users. The per-user probability of error (PUPE) is set to be approximately 0.05 or less. For this experiment the length $M(N+N_p)$ of transmitted packet size is approximately $10000$ with $N=100$ information bits. Hence, for packet sizes of $20, 25, 30, 35, 40, 45, 50 ,55$ and $60$ the equivalent repetition factors, $M$'s, are $83, 80, 77, 74, 71, 69,67, 65$ and $63$, respectively. The collision events (two packets arriving at the same time) are not included in this experiment, since we aim to find the best preamble length (and we assume it should be the best also for the case with collisions). Additionally, we have considered an adaptive threshold setting for the timing acquisition block; at iteration $i$, $i=1,2,\cdots$ we compute the average noise and interference power for every block within the window, and set the threshold for each such that the lower noise and interference powers lead to lower threshold values and vice versa. Our numerical results show that preamble length $N_p=30$ is a good choice which results in the higher overall numbers of supported users.  since it provides more users to transmit concurrently. 

Fig.~\ref{fig:SNRvsL} shows the results in terms of the lowest SNR ($\textrm{E}_b/\textrm{N}_0$) for which a certain number of active users ($L=K_a$) can concurrently transmit their packets through the channel. The results are plotted as functions of the SNR with respect to $L$. The red curve corresponds to the controlled delays system, where the starting times of the packets are known. We note that, sometimes depending on the channel situation and the number of transmitted packets, the timing detection and interference cancellation totally fail to function. In this situation there is a need for the users and the receiver to restart transmitting and receiving packets. The number of erroneous blocks resulting from this case which contributes to the total PUPE is $W_{\text{in}}K_a$, where $W_{\text{in}}$ (typically $W_{\text{in}}=5$) is the MUD and decoding window length.
The cyan curve with squares, the black and blue curves are related to the ITD/MUD-CRC discussed in the previous section with the restarting process, while they respectively show the results for the cases when the pools of 1, 4 and 8 preamble(s) are utilized. The utilized CRC part is of 8 bits generated using $X^8+X^7+X^6+X^4+X^2+1$ polynomial. Comparing these 3 curves shows that the pool of 4 preambles outperforms the other two pool choices. The reason why it is better than the pool of 8 preambles may be the higher false alarm probability of the pool 8 case. The ITD/MUD-CRC curves all follow the red curve (known timing) with some distance until at some point the curves bend over. The reason for this can be the smaller repetition factor $M$ which needs to be used for the CRC-based system, compared to that of the system with known timing. Because of the attached CRC bits in order to have the same packet size (10000 bits) the value of $M$ have to be smaller for all the cases which use CRC.
The yellow and green curves reflect the results for known delays system with coding and ITD/MUD-CRC with coding and restarting strategy while employing a pool of 4 preambles. The utilized code for the former is a shortened Hamming code (109,100), and for the latter is shortened Hamming code (117,108), where the first number points out to the code length and the second number is the information length. The orange curve with triangles is related to the regular transmission where we do not employ CRC bits.

The system with coding has the best performance for the low SNR values. However, since the total length of the coded bits is longer for the packet size to satisfy the 10000 length limitation the repetition factor, $M$, must decrease even more. This can cause performance degradation for higher SNR values for the coded system.

\begin{figure}[h]
	\centering
	\setlength{\unitlength}{1mm}
	\begin{picture}(120,90)
	\put(-3,0){\includegraphics[scale=0.8]{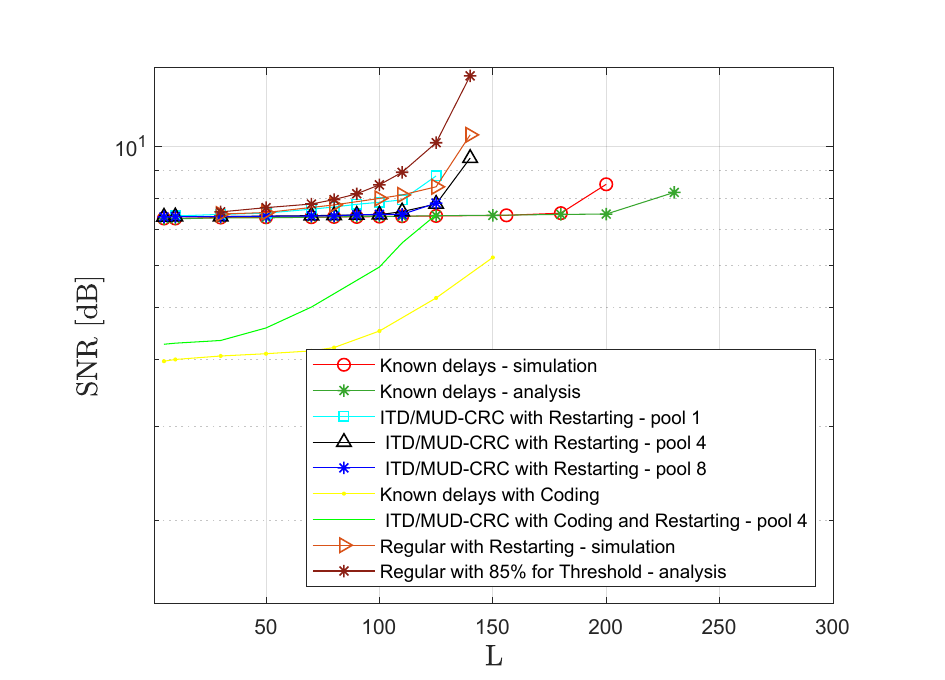}}
	
	\end{picture}
	\caption{SNR ($\textrm{E}_b/\textrm{N}_0$) versus highest number of users, $L=K_a$, for which  0.05 of PUPE (successful decoding convergence) is guaranteed.}
	\label{fig:SNRvsL_yInves}
\end{figure}
The cyan curve with squares and the dotted curve reflect the results for the analysis by neglecting collision-related term in~(\ref{eq:xitau}). The cyan is related to the case where we use~(\ref{eq:alpha1}) to update the value of $\alpha^{(i)}_{\tau}$, but the other employs~(\ref{eq:alpha2}) to do so.

To investigate the effect of false packets in the total performance of the system we provide variety of curves for the pools with 1,2 and 4 preambles in Fig.~\ref{fig:SNRvsL_yInves}. We have assumed no packet collisions for all of these results. The red dashed curves show the analysis results with the effect of false packets included. The blue curves with circles relate to the case where false packets are included and all real packets are assumed to be detected even before the first iteration of interference cancellation is performed. The brown curves with stars have the similar properties with the latter case, but no false packet detection is assumed. The green curves with squares represent the collision-free simulation results. Comparing the analysis curves, the adverse effect of false packets in performance is clearly observed especially for higher number of users, which is equivalent to larger interference. 

\begin{figure}[h]
	\centering
	\setlength{\unitlength}{1mm}
	\begin{picture}(120,50)
	\put(-25,5){\includegraphics[scale=0.4]{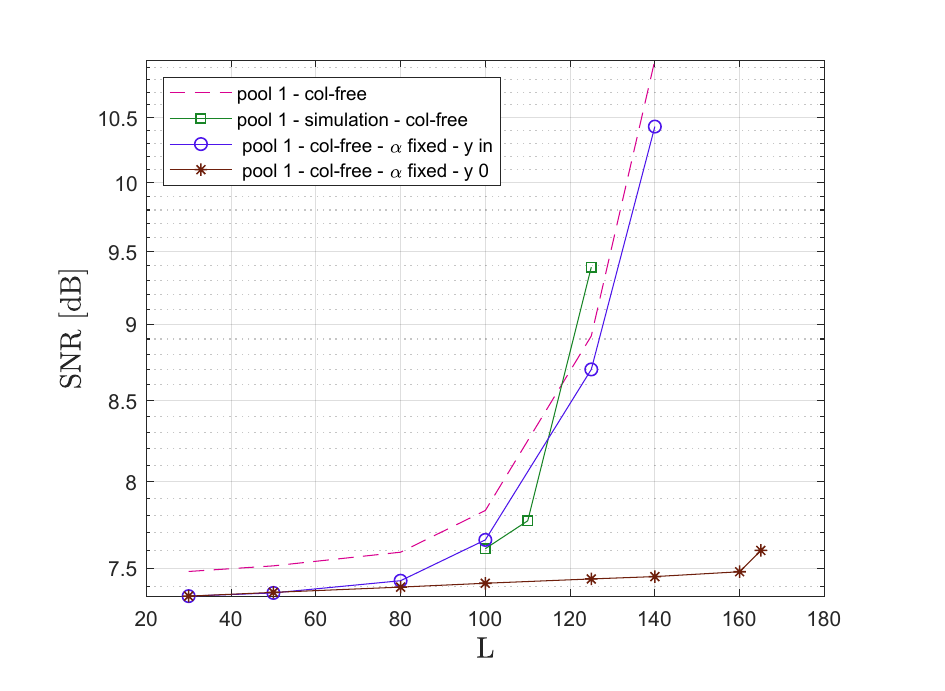}}
	
	\put(29,5){\includegraphics[scale=0.4]{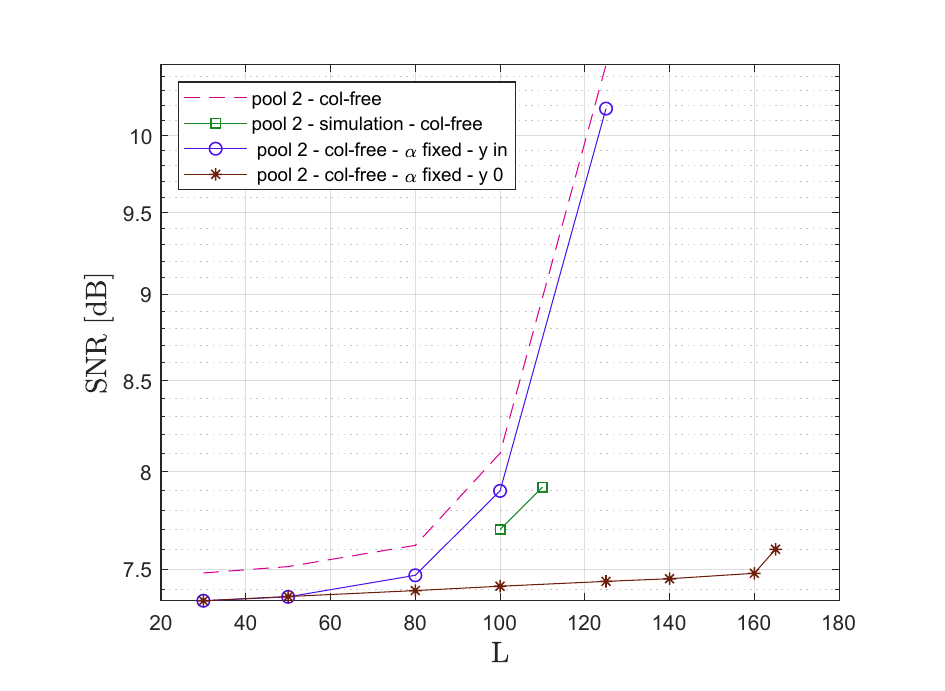}}
	
	\put(85,5){\includegraphics[scale=0.4]{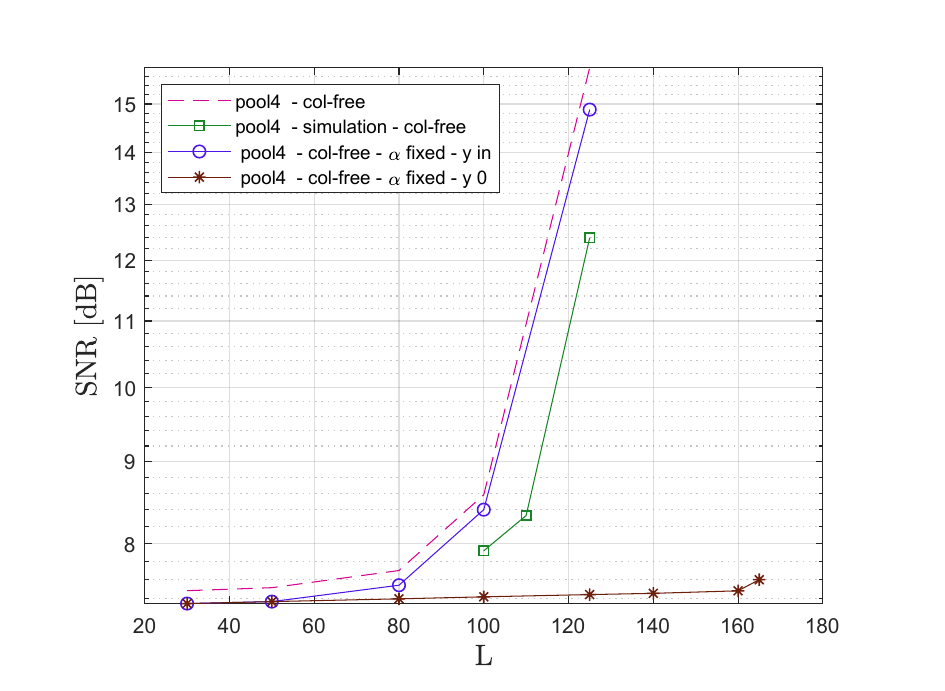}}
	
	\put(5.55,2){(a)}
	\put(60.55,2){(b)}
	\put(115.45,2){(c)}
	
	\end{picture}
	\caption{SNR versus $L$ for (a) pool with 1 preamble, (b) pool with 2 preamble and (a) pool with 4 preamble}
	\label{fig:SNRvsL}
\end{figure}

\section{Conclusion}
\label{sec:Conclusion}
In this paper, we proposed a fully asynchronous URA transmission system and test and analyze it for the AWGN channel. The proposed packet format consists of a preamble used for correlation-based timing acquisition and a payload modulated via bit repetition, permutation and scrambling. A small pool of preambles which correspond to distinct choices of permutation and scrambling sequences of the payload is proposed for collision resolutions. CRC bits are applied on the payload to facilitate the interference cancellation at the receiver.  Iterative timing acquisition and multiple user detection receiver is constructed and studied. The receiver operates in a sliding window fashion and takes advantage of natural spatial coupling of the payload graphs. For a wide range of active user loads the proposed receiver is capable to deliver a performance close to that of the system with know packet timing.


 \begin{appendices}
	
\section{Proof of Lemma 1}\label{Ap:Cor}

Suppose $\vec{a}_i=(a_{i1}, a_{i2}, \dots, a_{iMN_p})$, with $i=1,\cdots,L$, are random sequences of length $MN_p$ with the entries chosen from set $\{-1/\sqrt{M}, 1/\sqrt{M}\}$. The variance of the cross-correlation between $\vec{a}_j$ and $\sum_{i=1,i \neq j}^{L} (\vec{a}_i)$ is 
\begin{align}
	\text{var}\left(\sum_{i=1,i \neq j}^{L} \vec{a}_j.\vec{a}_i\right) = \mathbb{E}\left(\sum_{i=1,i \neq j}^{L} \vec{a}_j.\vec{a}_i\right)^2 - \left(\mathbb{E}\sum_{i=1,i \neq j}^{L} \vec{a}_j.\vec{a}_i\right)^2. \label{eq:var_cross}
	\end{align}
	Since $\mathbb{E}(\vec{a_j}.\vec{a_i})=0$ for $i \neq j$, 
	\begin{align}
	\text{var}\left(\sum_{i=1,i \neq j}^{L} \vec{a}_j.\vec{a}_i\right) = \mathbb{E}\left(\sum_{i=1,i \neq j}^{L}
	(a_{j1}a_{i1}+a_{j2}a_{i2}+\dots+a_{jMN_p}a_{iMN_p}) \right)^2 
	\end{align}
	For any $1\leq k_1,k_2,k_3,k_4 \leq L$ where at least two of them are not equal, $\mathbb{E}(a_{k_1}a_{k_2}a_{k_3}a_{k_4})=0$. Therefore,
	\begin{align}
	\text{var}\left(\sum_{i=1,i \neq j}^{L} \vec{a}_j.\vec{a}_i\right) =
	\mathbb{E}\left(\sum_{i=1,i \neq j}^{L}
	(a_{j1}^2a_{i1}^2+a_{j2}^2a_{i2}^2+\dots+a_{jMN_p}^2a_{iMN_p}^2) \right).
	\end{align}
	
	\begin{align}
	\text{var}\left(\sum_{i=1,i \neq j}^{L} \vec{a}_j.\vec{a}_i\right) = (L-1)N_p.
	\end{align}
	Similarly the variance of the correlation of $\vec{a}_i$ with the AWGN is
	\begin{align}
	\text{var}(\vec{a}_j.\vec{n}) = \mathbb{E} (a_{j1}n_1+a_{j2}n_2+\dots+a_{jMN_p}n_MN_p)^2=N_p\sigma^2
	\end{align}
	
	where $n_k$, $1 \leq k \leq MN_p$, are the noise vector entries. And simply the auto-correlation of $\vec{a_i}$ is $N_p$ with the power of $N_p^2$.

\section{Proof of Lemma 2}\label{Ap:p_col}

According to Fig. there are $L_{\tau}$ lines (equivalent to packet positions) with $n_\tau$ points (equivalent to possible starting times) on them. We first compute the number of cases that only the first points from all the first $q$ lines are selected (only $q$ collisions in the beginning of packet positions). One example of such cases is shown in Fig. by the red circles around the points. The total number of them is \begin{align}
 N'_{1} = \left(\n_{\tau}-(L_{\tau}-q)\right)\left(\n_{\tau}-(L_{\tau}-q)+1\right)\dots(\n_{\tau}-2)(\n_{\tau}-1)
\end{align}
Now, suppose all the second points from the first $q$ lines are selected  (only $q$ collisions in the second instance of packet positions). It is clear that the number of such cases, $N'_2$, is equal to $N'_1$, because there is no preference between the first and second instances. The same is true for the other instances ($N'_{n_\tau} = \dots = N'_3 = N'_1$). Therefore, the total number of $q$ collisions in the first $q$ lines is 
\begin{align}
 N'_{T,1} &= \sum_{}^{} \left(\n_{\tau}-(L_{\tau}-q)\right)\left(\n_{\tau}-(L_{\tau}-q)+1\right)\dots(\n_{\tau}-2)(\n_{\tau}-1)n_{\tau} \\
& =  (L_{\tau}-q+1)!\binom{n_{\tau}}{L_{\tau}-q+1} 
\end{align}
By allowing to select $q$ collided points from different lines, the total number of $q$ collision cases is
\begin{align}
 N'_{T} = (L_{\tau}-q+1)!\binom{n_{\tau}}{L_{\tau}-q+1} \binom{L_{\tau}}{q}.
\end{align}
Finally, the probability of $q$ collisions is obtained from
\begin{align}
p_{\textrm{col:q},\tau} = \frac{(L_{\tau}-q+1)!{\binom{n}{L_{\tau}-q+1}}\binom{L_{\tau}}{q}}{n_{\tau}^{L_{\tau}}}.
\end{align}
where the denominator is the total number of cases of choosing one point from each of $L_{\tau}$ lines (in total $L_{\tau}$ points).
\begin{figure}[h]
	\centering
	\setlength{\unitlength}{1mm}
	\begin{picture}(100,75)
	\put(0,5){\includegraphics[scale=0.8]{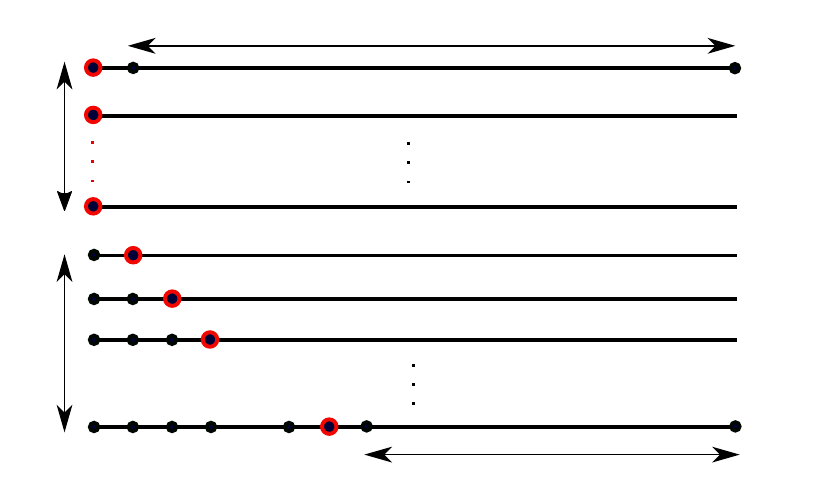}}
	\put(4.5,53){$q$}
	\put(-5,25){$L_{\tau}-q$}
	\put(50,69){$n_{\tau}-1$}
	\put(58,6){$n_{\tau}-(L_{\tau}-q+1)$}
	
	\put(103,62.7){$1$}
	\put(103,55.7){$2$}
	\put(103,44){$q$}
	\put(103,38){$q+1$}
	\put(103,32){$q+2$}
	\put(103,26.2){$q+3$}
	\put(103,14){$L_{\tau}$}
	\end{picture}
	\caption{}
	\label{fig:SNRvsL}
\end{figure}
 \end{appendices}

\bibliographystyle{IEEEtranNodash}
\bibliography{CoupBlock}


\stepcounter{page}
\label{lastpage}
\addtocounter{page}{-1}

\end{document}